\newtheorem{thm}{Theorem}
\newtheorem{lemma}{Lemma}
\newtheorem{pro}{Proposition}
\newtheorem{rk}{Remark}
\newtheorem{cor}{Corollary}
\numberwithin{equation}{section} \setcounter{tocdepth}{1}
\begin{document}
\title[Translation-invariant Gibbs measures for the Potts model]{
Description of the translation-invariant splitting Gibbs measures
for the Potts model on a Cayley tree
}

\author{C. K\"ulske, U. A. Rozikov, R. M. Khakimov}

\address{C.\ K\"ulske\\ Fakult\"at f\"ur Mathematik,
Ruhr-University of Bochum, Postfach 102148,\,
44721, Bochum,
Germany}
\email {Christof.Kuelske@ruhr-uni-bochum.de}

\address{U.\ A.\ Rozikov\\ Institute of mathematics,
29, Do'rmon Yo'li str., 100125, Tashkent, Uzbekistan.}
\email {rozikovu@yandex.ru}

\address{R.M. Khakimov\\ Namangan state university, Namangan, Uzbekistan.} \email
{rustam-7102@rambler.ru}

\begin{abstract} For the $q$-state Potts model on a Cayley tree of order $k\geq 2$ it is well-known that
at sufficiently low temperatures there are at least $q+1$ translation-invariant Gibbs measures
which are also tree-indexed Markov chains. Such measures are called  translation-invariant
splitting Gibbs measures (TISGMs).

In this paper we find all TISGMs, and show in particular that at sufficiently low
temperatures their number is  $2^{q}-1$.
We prove that there are $[q/2]$ (where $[a]$ is the integer part of $a$) critical temperatures at which
the number of TISGMs changes
and give the exact number of TISGMs
for each intermediate temperature.
For the binary tree we give explicit formulae for the
critical temperatures and the possible TISGMs.

While we show that these measures are never convex combinations of each other, the question
which of these measures are extremals in the set of all Gibbs measures will be treated in
future work.
\end{abstract}
\maketitle

{\bf Mathematics Subject Classifications (2010).} 82B26 (primary);
60K35 (secondary)

{\bf{Key words.}} Potts model, Critical temperature, Cayley tree,
Gibbs measure.

\section{Introduction}
It is known for the Ising model on a Cayley tree that
the number of translation-invariant 
splitting Gibbs measures
can only be one or three, depending on temperature.
For the $q$-state Potts model on a Cayley tree
it is known that for some temperatures there exist either 1 or {\bf at least}
$q+1$ TISGMs \cite{Ga8}, \cite{Ga9}, corresponding to homogeneous boundary conditions
equal to either one of the $q$ possible spin values or free boundary conditions.
However, a complete result about {\bf all} splitting Gibbs measures is lacking.

In this paper, we shall fully describe the set of TISGMs for the Potts model.
 We recall in this context that all extremal Gibbs measures on trees are necessarily Markov chains
 \cite{Ge}. Hence we found all extremal translation-invariant Gibbs measures for the Potts model. Note also that not all Gibbs measures which are
Markov chains (splitting Gibbs measures, SGMs)
have to be extremal measures (in the set of all Gibbs measures). This is the case
for the state obtained with free boundary conditions in some temperature interval below the transition
temperature for which there is more than one Gibbs measure.

In this paper we consider ferromagnetic Potts model.
The ferromagnetic Potts model is studied among others by N.Ganikhodjaev who proved that for some temperatures it possesses at least  $q+1$ TISGMs and an uncountable number
of  extremal non-translation invariant splitting Gibbs measures \cite{Ga8}, \cite{Ga9}.
For other results related to the Potts model on Cayley trees see
\cite{Ag1}-\cite{Per5}, \cite{Ro9}, \cite{Wa}, \cite{Wu} and the references therein.

Our main result of this paper
is the characterization and counting of TISGMs which is given in Theorem \ref{Theorem4}. Let us outline the proof.
Our analysis is based on a systematic investigation of the tree recursion for boundary fields
(boundary laws) whose fixed points are
characterizing the TISGMs. In this analysis we find {\bf all} fixed points.
We show that these fixed points can be characterized according to the number of their non-zero
components, see Theorem \ref{Theorem2}. Care is needed, since not all of these solutions give rise to different Gibbs measures, and we have to take into account of symmetries in a proper way
when going from the full description of fixed points to the full description of TISGMs
of Theorem \ref{Theorem4}. \bigskip

{\it Cayley tree.}
The Cayley tree $\Gamma^k$
of order $ k\geq 1 $ is an infinite tree, i.e., a graph without
cycles, such that exactly $k+1$ edges originate from each vertex.
Let $\Gamma^k=(V, L)$ where $V$ is the set of vertices and  $L$ the set of edges.
Two vertices $x$ and $y$ are called {\it nearest neighbors} if there exists an
edge $l \in L$ connecting them.
We will use the notation $l=\langle x,y\rangle$.
A collection of nearest neighbor pairs $\langle x,x_1\rangle, \langle x_1,x_2\rangle,...,\langle x_{d-1},y\rangle$ is called a {\it
path} from $x$ to $y$. The distance $d(x,y)$ on the Cayley tree is the number of edges of the shortest path from $x$ to $y$.

For a fixed $x^0\in V$, called the root, we set
\begin{equation*}
W_n=\{x\in V\,| \, d(x,x^0)=n\}, \qquad V_n=\bigcup_{m=0}^n W_m
\end{equation*}
and denote
$$
S(x)=\{y\in W_{n+1} :  d(x,y)=1 \}, \ \ x\in W_n, $$ the set  of {\it direct successors} of $x$.

{\it The model.} We consider the {\it Potts model} on a Cayley tree,
where each spin $\sigma(x)$ takes values in the set
$\Phi:=\{1,2,\dots,q\}$, and spins are assigned to the vertices
of the tree.

The (formal) Hamiltonian of Potts model is
\begin{equation}\label{ph}
H(\sigma)=-J\sum_{\langle x,y\rangle\in L}
\delta_{\sigma(x)\sigma(y)},
\end{equation}
where $J\in R$ is a coupling constant,
$\langle x,y\rangle$ stands for nearest neighbor vertices and $\delta_{ij}$ is the Kronecker
symbol:
$$\delta_{ij}=\left\{\begin{array}{ll}
0, \ \ \mbox{if} \ \ i\ne j\\[2mm]
1, \ \ \mbox{if} \ \ i= j.
\end{array}\right.
$$

{\it Gibbs measure.} Following \cite{Pr} (and subsequent works see \cite{BRZ}, \cite{Ro}), we consider a special class
of Gibbs measures. These measures are called in \cite{Pr}, \cite{Ge} Markov chains and in \cite{Za}, \cite{Za1}
entrance laws. In this paper we call them splitting Gibbs measures (as in \cite{Ro12}).

Define a finite-dimensional distribution of a probability measure $\mu$ in the volume $V_n$ as
\begin{equation}\label{p*}
\mu_n(\sigma_n)=Z_n^{-1}\exp\left\{-\beta H_n(\sigma_n)+\sum_{x\in W_n}{\tilde h}_{\sigma(x),x}\right\},
\end{equation}
where $\beta=1/T$, $T>0$ is the temperature,  $Z_n^{-1}$ is the normalizing factor, $\{{\tilde h}_x=({\tilde h}_{1,x},\dots, {\tilde h}_{q,x})\in R^q, x\in V\}$ is a collection of vectors and
$H_n(\sigma_n)$ is the restriction of the Hamiltonian on $V_n$.

We say that the probability distributions (\ref{p*}) are compatible if for all
$n\geq 1$ and $\sigma_{n-1}\in \Phi^{V_{n-1}}$:
\begin{equation}\label{p**}
\sum_{\omega_n\in \Phi^{W_n}}\mu_n(\sigma_{n-1}\vee \omega_n)=\mu_{n-1}(\sigma_{n-1}).
\end{equation}
Here $\sigma_{n-1}\vee \omega_n$ is the concatenation of the configurations.
In this case, there exists a unique measure $\mu$ on $\Phi^V$ such that,
for all $n$ and $\sigma_n\in \Phi^{V_n}$,
$$\mu(\{\sigma|_{V_n}=\sigma_n\})=\mu_n(\sigma_n).$$

Such a measure is called a {\it splitting Gibbs measure} (SGM) corresponding to the Hamiltonian (\ref{ph}) and vector-valued function ${\tilde h}_x, x\in V$.

It is known (see \cite{Ga8}, \cite[p.106]{Ro}) that
the probability distributions
$\mu_n(\sigma_n)$, $n=1,2,\ldots$, in
(\ref{p*}) are compatible for the Potts model iff for any $x\in V\setminus\{x^0\}$
the following equation holds:
\begin{equation}\label{p***}
 h_x=\sum_{y\in S(x)}F(h_y,\theta),
\end{equation}
where $F: h=(h_1, \dots,h_{q-1})\in R^{q-1}\to F(h,\theta)=(F_1,\dots,F_{q-1})\in R^{q-1}$ is defined as
$$F_i=\ln\left({(\theta-1)e^{h_i}+\sum_{j=1}^{q-1}e^{h_j}+1\over \theta+ \sum_{j=1}^{q-1}e^{h_j}}\right),$$
$\theta=\exp(J\beta)$, $S(x)$ is the set of direct successors of $x$ and $h_x=\left(h_{1,x},\dots,h_{q-1,x}\right)$ with
\begin{equation}\label{hh}
h_{i,x}={\tilde h}_{i,x}-{\tilde h}_{q,x}, \ \ i=1,\dots,q-1.
\end{equation}

Hence for any $h=\{h_x,\ \ x\in V\}$
satisfying (\ref{p***}) there exists a unique SGM $\mu$ for the Potts model.

\section{Results}

In this paper we consider SGMs which are translation-invariant, i.e., we assume $h_x=h=(h_1,\dots,h_{q-1})\in R^{q-1}$ for all $x\in V$.

Put
$$T_{cr}={J\over \ln\left(1+{q\over k-1}\right)}.$$

The main result of this paper is the following

\begin{thm} \label{Theorem4}
For the $q$-state ferromagnetic ($J>0$) Potts model on the Cayley tree of order $k\geq 2$ there are critical temperatures $T_{c,m}\equiv T_{c,m}(k,q)$, $m=1,\dots,[q/2]$ such that the following statements hold

\begin{itemize}
\item[1.] $$T_{c,1}>T_{c,2}>\dots>T_{c,[{q\over 2}]-1}>T_{c,[{q\over 2}]}\geq T_{cr};$$

\item[2.]
If $T>T_{c,1}$ then there exists a unique TISGM;

\item[3.]
If $T_{c,m+1}<T<T_{c,m}$ for some $m=1,\dots,[{q\over 2}]-1$ then there are
$1+2\sum_{s=1}^m{q\choose s}$ TISGMs.

\item[4.] If $T_{cr}\ne T<T_{c,[{q\over 2}]}$ then there are $2^{q}-1$ TISGMs.

\item[5.] If $T=T_{cr}$ then
the number of TISGMs is as follows
$$\left\{\begin{array}{ll}
2^{q-1}, \ \ \mbox{if} \ \ q -odd\\[2mm]
2^{q-1}-{q-1\choose q/2}, \ \ \mbox{if} \ \ q -even.
\end{array}\right.;$$

\item[6.] If $T=T_{c,m}$, $m=1,\dots,[{q\over 2}]$, \, ($T_{c,[q/2]}\ne T_{cr}$) then  there are
$1+{q\choose m}+2\sum_{s=1}^{m-1}{q\choose s}$ TISGMs.
\end{itemize}
\end{thm}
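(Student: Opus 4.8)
The plan is to reduce the recursion to a single scalar equation, classify its solutions via Theorem~\ref{Theorem2}, and then carefully quotient by the spin--permutation symmetry to count the resulting measures. For translation-invariant fields, equation~(\ref{p***}) collapses to $h=kF(h,\theta)$. Writing $z_i=e^{h_i}$ for $i=1,\dots,q-1$, setting $z_q:=1$ and $Z:=\sum_{i=1}^{q}z_i$, the identity $\sum_{j=1}^{q-1}e^{h_j}+1=Z$ turns each component into
\begin{equation*}
z_i=\left(\frac{(\theta-1)z_i+Z}{(\theta-1)+Z}\right)^{k},\qquad i=1,\dots,q,
\end{equation*}
the case $i=q$ being the trivial identity. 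Thus every coordinate is a positive root of one common equation; substituting $t=z^{1/k}$ reduces it to $(\theta-1)t^{k}-(\theta-1+Z)t+Z=0$, whose left-hand side is convex in $t>0$, positive at $t=0$ and zero at $t=1$. Hence it has at most two positive roots, one of which is always $t=1$. This is the mechanism behind Theorem~\ref{Theorem2}: for fixed $Z$ each $z_i$ equals either $1$ or a single further value $\xi\neq1$, so every fixed point is two-valued and determined by the subset of coordinates carrying the value $\xi$.

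I would then organise fixed points by the size $m$ of the smaller of the two classes, using the $m\leftrightarrow q-m$ symmetry to assume $m\le[q/2]$. Inserting $Z=(q-m)+m\xi$ back into the equation and writing $a=\theta-1$ gives, for each $m$,
\begin{equation*}
m\,t^{k+1}-(a+m)t^{k}+a\,t+(q-m)(t-1)=0,
\end{equation*}
which has $t=1$ as a root; dividing by $(t-1)$ leaves a degree-$k$ equation whose positive roots $t\neq1$ parametrise the non-disordered fixed points of that type. The analytic core is a one-parameter study of these equations as $\theta=e^{J\beta}$ varies. One checks that $t=1$ is a \emph{double} root precisely when $a(k-1)=q$, i.e. $\theta=1+\tfrac{q}{k-1}$, independently of $m$; equivalently, linearising $h\mapsto kF(h,\theta)$ at $h=0$ gives Jacobian $\tfrac{k(\theta-1)}{\theta+q-1}\,I$ whose eigenvalue reaches $1$ at the same $\theta$. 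This pins $T_{cr}$ as the temperature at which two-valued branches detach from the disordered branch. I would then show that each type $m$ additionally has a saddle-node value $T_{c,m}>T_{cr}$ above which it has no nontrivial root and below which it acquires a pair, and establish the ordering $T_{c,1}>\dots>T_{c,[q/2]}\ge T_{cr}$ of statement~1 by monotone comparison of the $m$-th equations.

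Granting the count of fixed points in each temperature window, the remaining—and most delicate—task is to convert fixed points into distinct TISGMs. Each non-disordered fixed point singles out a nonempty proper subset of $\{1,\dots,q\}$ carrying the distinguished field value, and running over the $\binom{q}{s}$ subsets of each active size $s\le m$ together with the two nontrivial root values (the two orientations $\xi>1$ and $\xi<1$) reproduces the factor $1+2\sum_{s=1}^{m}\binom{q}{s}$ of statements~2 and 3, collapsing to $2^{q}-1$ once all sizes up to $[q/2]$ are active, as in statement~4. The exceptional cases are the balanced splits: when $q$ is even and $s=q/2$, a subset and its complement have equal size, so the two orientations give the \emph{same} measure and the naive factor $2$ must drop to $1$—this is the source of the $\binom{q-1}{q/2}$ correction in statement~5 and of the $\binom{q}{m}$ (rather than $2\binom{q}{m}$) contribution at $T=T_{c,m}$ in statement~6, where the two roots for size $m$ have merged into a double root. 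At $T=T_{cr}$ the simultaneous collision of all two-valued branches with the disordered fixed point removes exactly the degenerate configurations, yielding the separate even/odd formula of statement~5.

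The hardest part is the analytic one: proving that the degree-$k$ reduced equation has \emph{exactly} the claimed number of positive roots in every temperature regime, that the transitions are non-degenerate saddle-nodes occurring at precisely $[q/2]$ distinct temperatures, and that they are ordered as stated. The spin-permutation bookkeeping is conceptually routine but error-prone exactly at the balanced partition and at $T=T_{cr}$, where orbit sizes change; keeping the correspondence between fixed points and measures bijective there is what makes statements~5 and 6 subtle rather than automatic. Finally, one should check separately (as promised in the abstract) that distinct fixed points never yield measures that are convex combinations of one another, which confirms the count is a count of genuinely different TISGMs.
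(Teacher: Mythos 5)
Your overall strategy coincides with the paper's: reduce the fixed-point problem to the one-parameter family of scalar equations indexed by the class size $m$, classify solutions by their two-valued structure, and count after quotienting by the $M\leftrightarrow M^c$ identification. Your convexity argument for the two-valued structure (each coordinate solves one common equation, convex in $t=z^{1/k}$, hence at most two positive roots for the given $Z$) is a clean alternative to the algebraic derivation in Proposition \ref{tti}; note, however, that it only constrains the coordinates of a \emph{given} solution for its own value of $Z$, and says nothing about how many self-consistent solutions exist.

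That latter question is where the genuine gaps lie, and you acknowledge rather than close them. The proof of item 1 --- which drives all the counts --- rests on three concrete facts that your proposal postulates but does not establish: (i) for every $k\ge 2$ the reduced polynomial $\varphi_m(x,\theta)=mx^k-(\theta-1)(x^{k-1}+\dots+x)+q-m$ has at most two positive roots (the paper gets this from Descartes' rule of signs via the sign pattern $+,-,\dots,-,+$); (ii) the system $\varphi_m=\varphi'_m=0$ has a unique positive solution $x_*(m)$, hence a unique saddle-node value $\theta_m$ (Descartes again, applied to (\ref{se})); and (iii) the strict ordering $\theta_1<\theta_2<\dots<\theta_{[q/2]}\le\theta_c$, which the paper obtains by inverting $m=\xi(x_*)$ and proving $\xi'<0$ and $\eta'<0$ for $x>1$ --- your ``monotone comparison of the $m$-th equations'' is a plan, not an argument. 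Two further points need care. First, the identification $\mu_{h(M)\mathbf 1_M}=\mu_{h(M^c)\mathbf 1_{M^c}}$, without which every count in items 3--6 would be wrong, requires the explicit computation of Proposition \ref{tp}: it is not a spin-permutation symmetry but the invariance of the measure under adding a constant to all components of $\tilde h$, combined with Lemma \ref{l1}. Second, your claim that each type has $T_{c,m}>T_{cr}$ fails for $m=q/2$ with $q$ even, where $T_{c,q/2}=T_{cr}$; moreover, at $\theta=\theta_c$ \emph{every} $\varphi_s$ (not only the balanced one) loses one of its two nontrivial roots to $x=1$, so the factor $2$ drops to $1$ for all $s$ there --- this, together with the total disappearance of the balanced branch, is what produces the even/odd dichotomy and the $\binom{q-1}{q/2}$ correction in item 5, rather than the orientation identification you invoke.
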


\begin{rk}
Theorem \ref{Theorem4} describes all possible TISGMs of the ferromagnetic Potts model.
In our next paper we shall find some regions for the temperature parameter
ensuring that a given TISGM is (non-)extreme in the set of all Gibbs measures.
We note that this is a technically difficult problem, and one can not expect
sharp bounds on temperatures for which extremality and non-extremality holds for
all types of states, as already the literature in the case of the
 free state in the Potts model
 (the so-called reconstruction problem) shows:
For a complete characterization of the temperature intervals
in  which extremality (resp. non-extremality) holds
it is in general neither sufficient to decide on stability of solutions, nor simply to check
the second largest eigenvalue of the corresponding transition matrix,
although both gives valuable partial information.
\end{rk}

The critical temperatures are given when $k=2$ by
\begin{equation}\label{T2}
T_{c,m}\equiv T_{c,m}(2,q)={J\over \ln\left(1+2\sqrt{m(q-m)}\right)}, \ \ m=1,2,\dots [{q\over 2}].
\end{equation}

From Theorem \ref{Theorem4} we get the following
\begin{cor}\label{co1} If $T\leq T_{c,1}$ then there are at least two extreme (not necessarily TI) Gibbs measures for the $q$-state Potts model on the Cayley tree.
\end{cor}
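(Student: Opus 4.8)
The plan is to combine the counting result of Theorem \ref{Theorem4} with the convex-geometric structure of the space of Gibbs measures: Theorem \ref{Theorem4} produces at least two distinct translation-invariant Gibbs measures in the regime $T\le T_{c,1}$, and the simplex structure of the set of all Gibbs measures then forces at least two \emph{extremal} ones, which need not be translation invariant.

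First I would extract the multiplicity statement from Theorem \ref{Theorem4}. For $T$ strictly below $T_{c,1}$ the relevant cases (statement 3 with $m=1$, giving $1+2\binom{q}{1}=2q+1$, together with statements 4--6 at still lower temperatures) all yield a count of at least $q+1\ge 3$; the boundary value $T=T_{c,1}$ is handled by statement 6 and, away from the degenerate coincidence $T_{c,1}=T_{cr}$, likewise gives more than one TISGM. Since Theorem \ref{Theorem4} enumerates \emph{distinct} TISGMs --- this is exactly the content of the reduction from fixed points to measures carried out there, and it is reinforced by the abstract's assertion that the listed measures are never convex combinations of one another --- I may fix two distinct TISGMs $\mu_1\ne\mu_2$.

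Next I would pass from distinct Gibbs measures to distinct extreme points. By construction every splitting Gibbs measure is a Gibbs measure for the Potts specification, so $\mu_1,\mu_2$ lie in the set $\mathcal G$ of all Gibbs measures. This $\mathcal G$ is convex and compact in the topology of local convergence and, by the general theory (see \cite{Ge}), is a Choquet simplex whose extreme points are exactly the extremal Gibbs measures; in particular Krein--Milman gives $\mathcal G=\overline{\mathrm{conv}}(\mathrm{ext}\,\mathcal G)$. Were there only one extremal Gibbs measure $\nu$, the closed convex hull of $\{\nu\}$ would be the singleton $\{\nu\}$, forcing $\mu_1=\mu_2=\nu$ and contradicting $\mu_1\ne\mu_2$. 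Hence $\mathcal G$ has at least two extreme points, i.e. at least two extremal Gibbs measures.

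The argument is soft once Theorem \ref{Theorem4} is granted, and the only step requiring genuine care is the first one: one must be certain that the integers counted in Theorem \ref{Theorem4} correspond to genuinely different probability measures, not to repeated representatives of a single measure. That distinctness is supplied by the symmetry bookkeeping inside the proof of Theorem \ref{Theorem4} and by the non-mixing property recorded in the abstract; with it established, the simplex/Krein--Milman step delivers the conclusion immediately.
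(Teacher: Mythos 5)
Your proposal is correct and follows essentially the same route as the paper: both arguments extract at least two distinct (translation-invariant) Gibbs measures from the counting theorem and then invoke the extremal decomposition of the simplex of Gibbs measures to conclude that a single extreme point would force all Gibbs measures to coincide, a contradiction. Your Krein--Milman phrasing is a slightly cleaner packaging of the paper's ``a non-extreme measure cannot decompose over the single extreme measure $\mu_0$'' step, and your explicit caveat about the degenerate coincidence $T_{c,1}=T_{cr}$ is a point the paper's own proof passes over silently.
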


Let ${\mathcal G}_{TI}$ be the set of all TISGMs. We have the following structural
statement, expressing an independence of the states in ${\mathcal G}_{TI}$ of each other.

\begin{thm}\label{tex} All measures  $\mu\in {\mathcal G}_{TI}$
are extremal points of the convex hull of the set ${\mathcal G}_{TI}$.
\end{thm}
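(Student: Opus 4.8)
The plan is to show that the finite set ${\mathcal G}_{TI}$ is in convex position, i.e.\ that no $\mu\in{\mathcal G}_{TI}$ lies in the convex hull of the remaining measures. By Theorem \ref{Theorem2} every $\mu\in{\mathcal G}_{TI}$ is the splitting Gibbs measure attached to a translation-invariant solution $h$ of (\ref{p***}), and these solutions are organized by the set $A\subseteq\{1,\dots,q\}$ of their ``boosted'' coordinates; write $\mu=\mu_A$ accordingly. Since expectations are affine in the measure, a representation $\mu_A=\sum_{B\ne A}\lambda_B\mu_B$ would force $\int f\,d\mu_A=\sum_{B}\lambda_B\int f\,d\mu_B$ for every bounded observable $f$. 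Hence it suffices to produce, for each $A$, a bounded local observable $f_A$ with $\int f_A\,d\mu_A>\int f_A\,d\mu_B$ for all $B\ne A$: in any convex representation one then gets $\int f_A\,d\mu_A=\sum_B\lambda_B\int f_A\,d\mu_B\le \int f_A\,d\mu_A$, with equality only if every $\mu_B$ in the support coincides with $\mu_A$, so that $\mu_A$ is an extreme (indeed exposed) point of the hull.

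First I would record the relevant marginals from the two-valued structure of $h$. The one-point marginal of $\mu_A$ has the form $p^{(A)}_i=a_{|A|}$ for $i\in A$ and $p^{(A)}_i=b_{|A|}$ for $i\notin A$, with $|A|\,a_{|A|}+(q-|A|)\,b_{|A|}=1$ and $a_{|A|}\ne b_{|A|}$ away from the symmetric solution. It is essential to notice that one-point functions alone do \emph{not} suffice: already for $q=2$ the symmetric measure has marginal $(\tfrac12,\tfrac12)$, which is exactly the midpoint of the two ordered marginals $(a,1-a)$ and $(1-a,a)$, so ${\mathcal G}_{TI}$ fails to be in convex position after projecting to one-point functions. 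Correlation functions are therefore indispensable. The nearest-neighbour marginal of $\mu_A$ is the explicit affine functional $Q^{(A)}_{ij}\propto \theta^{\delta_{ij}}u_iu_j$, where $u_i$ is again two-valued ($u_i$ high on $A$, low on $A^{c}$), and more generally the joint law on a ball $V_n$ is an explicit product expression in $\theta$ and the weights $u_i$.

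Two types of separation are then arranged. For measures $\mu_A,\mu_B$ with $|A|=|B|$ the vectors $p^{(A)}=b_m\mathbf 1+(a_m-b_m)\mathbf 1_A$ are a scaled, shifted family of hypersimplex vertices, hence in convex position, so the one-point functions already separate them. The genuinely new input is the edge-agreement observable $\delta_{\sigma(x)\sigma(y)}$: a short computation gives $\int\delta_{\sigma(x)\sigma(y)}\,d\mu_0=\theta/(\theta+q-1)$ in the symmetric state, while in any boosted state (and in any mixture of such states) the value is strictly different, which resolves exactly the one-point failure seen above for $q=2$. Combining the one-point functions with the edge (and, if necessary, $V_n$-) correlations into a single exposing functional $f_A$, or equivalently verifying that matching all these correlations in a putative decomposition forces $\lambda_A=1$, yields the claim.

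The main obstacle is the combinatorics of the full family of $2^{q}-1$ solutions coexisting at low temperature. Two things must be controlled at once. First, measures with $|A|=|B|$ are permutations of one another, so the separating observables must break the color-permutation symmetry; this is why products of single-site indicators over $V_n$, indexed by specific color patterns, are needed rather than symmetric quantities such as $\delta_{\sigma(x)\sigma(y)}$ alone. Second, one must exclude subtler affine relations among solutions of different sizes $|A|$, where the key inputs are the strict monotonicity of the weights $a_m,b_m,u_i$ in $m$ and in temperature coming from Theorem \ref{Theorem4}. Establishing that the moment matrix $M_{\eta,B}=\mu_B(\sigma|_{V_n}=\eta)$ attains full rank on ${\mathcal G}_{TI}$ for a suitable $n$, uniformly over all admissible temperatures and all $q$, is the technical heart of the argument.
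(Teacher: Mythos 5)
Your overall strategy --- exhibiting, for each $\mu_A\in{\mathcal G}_{TI}$, a local observable whose expectation strictly separates $\mu_A$ from every other member, hence from any mixture --- is legitimate in principle (for a finite set of measures, extreme points of the convex hull are automatically exposed, so nothing is lost by aiming for exposing functionals). But as written the argument has a genuine gap: the decisive step is never carried out. You compute one-point marginals and the edge-agreement expectation, correctly observe that one-point functions alone cannot work (the $q=2$ midpoint example), and then defer the entire content of the theorem to the unproved assertion that a suitable moment matrix ``attains full rank on ${\mathcal G}_{TI}$ for a suitable $n$, uniformly over all admissible temperatures and all $q$.'' That is precisely the statement to be proved, and nothing in the proposal establishes it. In addition, the strict monotonicity of the weights $a_m,b_m$ in $m$ and in temperature, which you invoke as ``coming from Theorem \ref{Theorem4},'' is not contained in that theorem (which is a counting statement); near a critical temperature $T_{c,m}$ two solutions for the same $m$ merge, and controlling all the required strict inequalities uniformly in $\theta$ is exactly the delicate analysis your plan would still have to supply.

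The paper avoids all of this with a structural argument you did not use: every $\mu\in{\mathcal G}_{TI}$ is a splitting Gibbs measure, i.e.\ a tree-indexed Markov chain with a constant boundary law, and by Corollary (12.18) of \cite{Ge} a nontrivial convex combination of pairwise distinct Markov chains is never itself a Markov chain. Hence if some $\mu\in{\mathcal G}_{TI}$ were a mixture of the others, the left-hand side of the decomposition would be a Markov chain while the right-hand side could not be --- a contradiction, with no correlation computations needed. If you want to salvage your approach, the cleanest fix is to replace the moment-matrix analysis by this Markov-chain rigidity; otherwise you must actually construct the exposing observables for the full family of $2^{q}-1$ states at low temperature, which is a substantial piece of work that the proposal only gestures at.
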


The theorem says equivalently:
Any measure $\mu\in {\mathcal G}_{TI}$ can not be non-trivial
mixture of measures from ${\mathcal G}_{TI}\setminus \{\mu\}$.
Again, we note that it could be a non-trivial mixture of Gibbs-measures
of non-translation invariant states.

\section{Proofs}
For $h_x=h$ from equation (\ref{p***}) we get $h=kF(h,\theta)$, i.e.,
\begin{equation}\label{pt}
h_i=k\ln\left({(\theta-1)e^{h_i}+\sum_{j=1}^{q-1}e^{h_j}+1\over \theta+ \sum_{j=1}^{q-1}e^{h_j}}\right),\ \ i=1,\dots,q-1.
\end{equation}
Denoting $z_i=\exp(h_i), i=1,\dots,q-1$, we get from (\ref{pt})
\begin{equation}\label{pt1}
z_i=\left({(\theta-1)z_i+\sum_{j=1}^{q-1}z_j+1\over \theta+ \sum_{j=1}^{q-1}z_j}\right)^k,\ \ i=1,\dots,q-1.
\end{equation}

The following proposition plays the key role in our proofs:

\begin{pro}\label{tti}  \label{Theorem2} For any solution $z=(z_1,\dots,z_{q-1})$ of the system of equations (\ref{pt1}) there exists $M\subset \{1,\dots,q-1\}$ and $z^*>0$ such that
$$z_i=\left\{\begin{array}{ll}
1, \ \ \mbox{if} \ \ i\notin M\\[3mm]
z^*, \ \ \mbox{if} \ \ i\in M.
\end{array}
\right.
$$
\end{pro}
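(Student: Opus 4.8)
The plan is to decouple the system by treating the total sum as a parameter. First I would set $S=\sum_{j=1}^{q-1}z_j$, so that for a given solution $S$ is a fixed positive number, and rewrite each equation of (\ref{pt1}) as the scalar self-consistency relation
$$z_i=\left(\frac{(\theta-1)z_i+S+1}{\theta+S}\right)^k.$$
In the ferromagnetic regime $\theta=\exp(J\beta)>1$, and since all $z_j>0$ the quantity inside the parentheses is positive, so I may take positive $k$-th roots and substitute $u_i=z_i^{1/k}>0$. This turns the relation into the statement that every $u_i$ is a positive root of the single polynomial
$$p(u)=(\theta-1)u^k-(\theta+S)u+(S+1).$$
The crucial point of this reduction is that $p$ does not depend on the index $i$: although the original system is coupled through $S$, once $S$ is fixed all of the components $z_1,\dots,z_{q-1}$ satisfy one and the same equation.

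Next I would count the positive roots of $p$. Listing the coefficients of $p$ in order of decreasing degree gives the sign pattern $+,-,+$ (the intermediate coefficients vanishing for $k\ge 2$), so by Descartes' rule of signs $p$ has either $0$ or $2$ positive roots, counted with multiplicity. A direct check shows that $u=1$ is always a root, since $(\theta-1)-(\theta+S)+(S+1)=0$; hence $p$ has exactly two positive roots counted with multiplicity, and therefore at most two distinct positive roots, one of which is $u=1$.

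Finally I would translate this back to the $z$-variables. Writing $u^*$ for the second positive root of $p$ and setting $z^*=(u^*)^k$ (with $z^*=1$ in the degenerate case where $u=1$ is a double root), every $u_i$ equals either $1$ or $u^*$, so every $z_i$ equals either $1$ or $z^*$. Taking $M=\{i: z_i=z^*\}$ then yields the claimed form. I expect the main obstacle to be precisely the initial insight that fixing $S$ decouples the system into copies of a single scalar equation; once that is in place, the counting via Descartes' rule is routine, and note that no information about which values of $S$ are actually admissible (the analysis that Theorem \ref{Theorem4} ultimately requires) is needed for this structural statement.
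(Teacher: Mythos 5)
Your proof is correct, and it rests on the same key reduction as the paper's: the system (\ref{pt1}) is coupled only through the sum $S=\sum_j z_j$, so once a solution is fixed, all components satisfy one and the same scalar equation. Where you diverge is in the finishing step. The paper keeps only the coordinates with $x_i=z_i^{1/k}\ne 1$, solves the $i$-th equation for $\theta-1$ to obtain
$$\theta-1=\frac{\sum_{j}x_j^k+q-m}{x_i\left(x_i^{k-2}+\cdots+1\right)},$$
and concludes $x_i=x_j$ from the strict monotonicity of $x\mapsto x^{k-1}+\cdots+x$ on $(0,\infty)$; you instead note that every $u_i$ is a positive root of the single polynomial $p(u)=(\theta-1)u^k-(\theta+S)u+(S+1)$ and invoke Descartes' rule to get at most two positive roots, one of which is $u=1$. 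The two arguments are equivalent in content --- factoring $p(u)=(u-1)\bigl[(\theta-1)u(u^{k-2}+\cdots+1)-(S+1)\bigr]$ recovers exactly the paper's monotone expression --- but your Descartes count is precisely the device the paper itself uses later (Step 1 of the proof of Theorem \ref{Theorem4}, for $\varphi_m$), so your version unifies the two counting steps and even gives the ``at most one non-unit value'' conclusion in one stroke. One small point: your sign pattern uses $\theta>1$, which is the ferromagnetic regime actually treated; for $\theta\le 1$ the pattern yields at most one positive root, namely $u=1$, so the conclusion still holds with $M=\emptyset$, and nothing is lost.
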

\begin{proof} It is easy to see that $z_i=1$ is a solution of $i$th equation of the system (\ref{pt1}) for each $i=1,2,\dots,q-1$. Thus for a given $M\subset \{1,\dots,q-1\}$ one can take $z_i=1$ for any $i\notin M$. Let $\emptyset\ne M\subset\{1,\dots,q-1\}$, without loss of generality we can take $M=\{1,2, \dots, m\}$, $m\leq q-1$, i.e. $z_i=1$, $i=m+1,\dots, q$. Now we shall
prove that $z_1=z_2=\dots=z_m$.  Denote $\sqrt[k]{z_i}=x_i$, $i=1,\dots,m$. Then from (\ref{pt1}) we have
\begin{equation}\label{r1}
    x_i=\frac{(\theta-1)x_i^k+\sum_{j=1}^mx_j^k+q-m}{\sum_{j=1}^mx_j^k+q-m-1+\theta}, \ \ i=1,\dots,m.
\end{equation}
  By assumption $x_i\neq1, i=1,2,...,m$ from (\ref{r1}) we get
$$\theta-1=\frac{(x_i-1)\left(\sum_{j=1}^mx_j^k+q-m\right)}{x_i^k-x_i}=
\frac{\sum_{j=1}^mx_j^k+q-m}{x_i(x_i^{k-2}+x^{k-3}_i+\dots+1)}, \ \ i=1,\dots,m.$$
From these equations we get
$$x_i(x_i^{k-2}+x^{k-3}_i+\dots+1)=x_j(x_j^{k-2}+x^{k-3}_j+\dots+1),$$
 which gives $x_i=x_j$ for any $i,j\in \{1,\dots,m\}$.
\end{proof}
By this proposition we have that any TISGM of the Potts model corresponds to a solution of the following equation
\begin{equation}\label{rm}
z=f_m(z)\equiv \left({(\theta+m-1)z+q-m\over mz+q-m-1+\theta}\right)^k,
\end{equation}
for some $m=1,\dots,q-1$.

\begin{lemma}\label{l1} If $z(m_1)$ is a solution to (\ref{rm}) with $m=m_1$ then $z^{-1}(m_1)$ is a
solution to (\ref{rm}) with $m=q-m_1$.
\end{lemma}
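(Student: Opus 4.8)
The plan is to reduce the statement to a single functional identity relating $f_m$ and $f_{q-m}$ under the involution $z\mapsto z^{-1}$, namely
\begin{equation}\label{keyid}
f_{q-m}\!\left(z^{-1}\right)=\frac{1}{f_m(z)}.
\end{equation}
Granting \eqref{keyid}, the lemma is immediate: if $z=z(m_1)$ satisfies $z=f_{m_1}(z)$, then substituting into \eqref{keyid} with $m=m_1$ gives $f_{q-m_1}(z^{-1})=1/f_{m_1}(z)=z^{-1}$, so $z^{-1}$ is a fixed point of $f_{q-m_1}$, i.e.\ a solution of \eqref{rm} with $m=q-m_1$.

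To establish \eqref{keyid} I would first write $f_{q-m}$ explicitly by substituting $m\mapsto q-m$ into \eqref{rm}. Under this replacement the coefficients transform as $\theta+m-1\mapsto \theta+q-m-1$, $\,q-m\mapsto m$, the denominator coefficient $m\mapsto q-m$, and $q-m-1+\theta\mapsto m-1+\theta$, so that the base of $f_{q-m}(w)$ becomes $\dfrac{(\theta+q-m-1)w+m}{(q-m)w+m-1+\theta}$. I would then set $w=z^{-1}$ and clear the inner reciprocals by multiplying numerator and denominator by $z$, which yields the base $\dfrac{mz+\theta+q-m-1}{(\theta+m-1)z+q-m}$. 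This is exactly the reciprocal of the base $\dfrac{(\theta+m-1)z+q-m}{mz+q-m-1+\theta}$ of $f_m(z)$; raising to the $k$-th power gives \eqref{keyid}.

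This is essentially a routine algebraic verification, so there is no serious obstacle; the only point requiring care is bookkeeping, ensuring that under $m\mapsto q-m$ the numerator and denominator of the base of the fraction swap roles correctly (up to the factor of $z$ introduced by $w=z^{-1}$), and that $z^*>0$ implies $z^{-1}>0$ so that the resulting object is again an admissible solution in the sense of Proposition \ref{tti}.
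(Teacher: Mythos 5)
Your proof is correct and follows exactly the paper's approach: the paper's entire proof is the one-line identity $f_{m}(x)=1/f_{q-m}(x^{-1})$, which is precisely your key identity \eqref{keyid} rearranged, and your algebraic verification of it checks out. You have simply supplied the routine computation that the paper leaves as ``easy to see.''
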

\begin{proof} It is easy to see that $f_{m}(x)=1/f_{q-m}(x^{-1})$.
\end{proof}

Let $M\subset \{1,\dots, q-1\}$, with $|M|=m$. Then we denote the corresponding solution of (\ref{rm})
by $z(M)=\exp(h(M))$. It is clear that $h(M)=h(m)$, i.e. it only depends on the cardinality of $M$.
Put
$${\mathbf 1}_M=(e_1,\dots,e_q), \ \ \mbox{with} \ \ e_i=1 \ \ \mbox{if} \ \ i\in M, \ \ e_i=0 \ \ \mbox{if} \ \ i\notin M.$$

 We denote by $\mu_{h(M)\mathbf 1_M}$ the TISGM corresponding to the solution $h(M)$.
\begin{rk}\label{ns} By formula (\ref{hh}) we have
$${\tilde h}_i(M)=\ln(\tilde{z}_i(M))=\left\{\begin{array}{ll}
h(M)+{\tilde h}_q(M), \ \ \mbox{if} \ \ i\in M\\[2mm]
{\tilde h}_q(M), \ \ \mbox{if} \ \ i\notin M
\end{array}\right.,$$ i.e.
$${\tilde h}(M){\mathbf 1}_M=h(M){\mathbf 1}_M+{\tilde h}_q(M){\mathbf 1}_{\{1,\dots,q\}}.$$
 Hence for a given $M$, $|M|=m$ and a solution $h(M)$ the number of vectors ${\tilde h}(M){\mathbf 1}_M$ is equal to ${q\choose m}$.
\end{rk}

The following proposition is useful.

\begin{pro}\label{tp} For any finite $\Lambda\subset V$ and any $\sigma_\Lambda\in \{1,\dots,q\}^\Lambda$ we have
\begin{equation}\label{mu}
\mu_{h(M){\mathbf 1}_M}(\sigma_\Lambda)=\mu_{h(M^c){\mathbf 1}_{M^c}}(\sigma_\Lambda),
\end{equation}
where $M^c=\{1,\dots,q\}\setminus M$ and $h(M^c)=-h(M)$.
\end{pro}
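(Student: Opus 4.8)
The plan is to reduce the claimed equality of measures to a single structural fact: the finite-dimensional distributions $(\ref{p*})$ depend on the boundary field only through the \emph{differences} of its components, so two fields that differ, vertex by vertex, by a spin-independent constant produce exactly the same SGM. First I would record this gauge invariance explicitly. If at each $x$ we replace $\tilde h_{\sigma(x),x}$ by $\tilde h_{\sigma(x),x}+c_x$ with a constant $c_x$ not depending on the spin value, then the exponent in $(\ref{p*})$ acquires the extra term $\sum_{x\in W_n}c_x$, which is independent of $\sigma_n$; it factors out of the numerator and is absorbed by $Z_n^{-1}$. Hence $\mu_n$, and therefore the SGM $\mu$, is unchanged. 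This is precisely why the recursion is phrased in the gauge-fixed variables $h_{i,x}=\tilde h_{i,x}-\tilde h_{q,x}$ of $(\ref{hh})$ rather than in $\tilde h$ itself.

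Next I would check that the second measure is well defined and identify its field. Since $|M^c|=q-|M|$ and $z(M)=\exp(h(M))$, Lemma \ref{l1} shows that $z^{-1}(M)=\exp(-h(M))$ solves $(\ref{rm})$ with parameter $q-|M|$; this is exactly the solution attached to $M^c$, so $h(M^c)=-h(M)$ as asserted, and $\mu_{h(M^c)\mathbf 1_{M^c}}$ is a genuine TISGM.

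The decisive step is then a direct comparison of the two boundary fields, viewed as vectors in $\R^q$ (one entry per spin value) as in Remark \ref{ns}. Up to the gauge constant, the field generating $\mu_{h(M)\mathbf 1_M}$ is the vector $h(M)\mathbf 1_M$, while the field generating $\mu_{h(M^c)\mathbf 1_{M^c}}$ is $h(M^c)\mathbf 1_{M^c}=-h(M)\mathbf 1_{M^c}$. Their difference is
$$
h(M)\mathbf 1_M-\bigl(-h(M)\mathbf 1_{M^c}\bigr)=h(M)\,(\mathbf 1_M+\mathbf 1_{M^c})=h(M)\,\mathbf 1_{\{1,\dots,q\}},
$$
a scalar multiple of the all-ones vector, i.e. exactly a gauge shift with $c_x\equiv h(M)$ at every vertex. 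By the invariance established in the first step the two fields define identical finite-dimensional distributions, hence the same SGM, and so $\mu_{h(M)\mathbf 1_M}(\sigma_\Lambda)=\mu_{h(M^c)\mathbf 1_{M^c}}(\sigma_\Lambda)$ for every finite $\Lambda$ and every $\sigma_\Lambda$.

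The only point needing care — a mild obstacle — is the bookkeeping of the index set: in Proposition \ref{Theorem2} the set $M$ was a subset of $\{1,\dots,q-1\}$, whereas here $\mathbf 1_M$ and $\mathbf 1_{M^c}$ live in $\R^q$ and $M^c$ is the complement inside the full set $\{1,\dots,q\}$. I would make explicit that reinstating the $q$-th coordinate normalized away by $(\ref{hh})$ is harmless precisely because of the gauge freedom, and that it is exactly this extra coordinate which makes the identity $\mathbf 1_M+\mathbf 1_{M^c}=\mathbf 1_{\{1,\dots,q\}}$ hold on the nose, so that the displayed difference is genuinely a constant vector.
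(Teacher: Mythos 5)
Your proof is correct and follows essentially the same route as the paper's: both rest on Lemma \ref{l1} giving $h(M^c)=-h(M)$, the identity ${\mathbf 1}_M+{\mathbf 1}_{M^c}={\mathbf 1}_{\{1,\dots,q\}}$, and the cancellation of spin-independent constants in the normalized finite-volume distributions. The paper merely carries out that cancellation inline, writing $\mu_{h(M){\mathbf 1}_M}(\sigma_\Lambda)$ explicitly as a ratio of Boltzmann weights over $\partial\Lambda$ and factoring out $(\tilde{z}_q(M))^{|\partial\Lambda|}$ and $(z(M))^{|\partial\Lambda|}$, whereas you package the same cancellation as a gauge-invariance statement up front.
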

\begin{proof} Let ${\mathbb I}$ be the indicator function. Then we have ${\mathbb I}(x\in M)+{\mathbb I}(x\in M^c)=1$.
By Lemma \ref{l1} we get
$$1/z(M)=z(M^c), \ \ \mbox{i.e.}\ \ -h(M)=h(M^c).$$
Using these formulas and Remark \ref{ns} for $\partial \Lambda=\{x\in V\setminus \Lambda: \exists y\in \Lambda,  \langle x,y\rangle\}$ we get
$$  \mu_{h(M){\mathbf 1}_M}(\sigma_\Lambda)={\exp\left(-\beta H(\sigma_\Lambda)\right)(\tilde{z}_q(M))^{|\partial \Lambda|}(z(M))^{\sum_{x\in \partial \Lambda}{\mathbb I}(\sigma(x)\in M)}\over \sum_{\varphi_\Lambda}\exp\left(-\beta H(\varphi_\Lambda)\right)(\tilde{z}_q(M))^{|\partial \Lambda|}(z(M))^{\sum_{x\in \partial \Lambda}{\mathbb I}(\varphi(x)\in M)}}$$
$$  ={\exp\left(-\beta H(\sigma_\Lambda)\right)(z(M))^{|\partial \Lambda|-\sum_{x\in \partial \Lambda}{\mathbb I}(\sigma(x)\in M^c)}\over \sum_{\varphi_\Lambda}\exp\left(-\beta H(\varphi_\Lambda)\right)(z(M))^{|\partial \Lambda|-\sum_{x\in \partial \Lambda}{\mathbb I}(\varphi(x)\in M^c)}}$$
$$  ={\exp\left(-\beta H(\sigma_\Lambda)\right)(1/z(M))^{\sum_{x\in \partial \Lambda}{\mathbb I}(\sigma(x)\in M^c)}\over \sum_{\varphi_\Lambda}\exp\left(-\beta H(\varphi_\Lambda)\right)(1/z(M))^{\sum_{x\in \partial \Lambda}{\mathbb I}(\varphi(x)\in M^c)}}$$
$$  ={\exp\left(-\beta H(\sigma_\Lambda)\right)(\tilde{z}_q(M^c))^{|\partial \Lambda|}(z(M^c))^{\sum_{x\in \partial \Lambda}{\mathbb I}(\sigma(x)\in M^c)}\over \sum_{\varphi_\Lambda}\exp\left(-\beta H(\varphi_\Lambda)\right)
(\tilde{z}_q(M^c))^{|\partial \Lambda|}(z(M^c))^{\sum_{x\in \partial \Lambda}{\mathbb I}(\varphi(x)\in M^c)}}=\mu_{h(M^c){\mathbf 1}_{M^c}}(\sigma_\Lambda).$$
\end{proof}
The following is a corollary of Propositions \ref{tti} and \ref{tp}.
\begin{cor}\label{c1} Each TISGM corresponds to a solution of (\ref{rm}) with some $m\leq [q/2]$.
Moreover, for a given $m\leq [q/2]$,
a fixed solution to (\ref{rm}) generates ${q\choose m}$ vectors ${\tilde h}$ giving ${q\choose m}$ TISGMs.
\end{cor}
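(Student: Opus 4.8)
The plan is to assemble the statement directly from the three structural results already in hand: Proposition \ref{tti} reduces the problem to the scalar equation (\ref{rm}), Lemma \ref{l1} together with Proposition \ref{tp} fold the range of $m$ down to $m\le[q/2]$, and Remark \ref{ns} supplies the combinatorial factor ${q\choose m}$.

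First I would recall that, by Proposition \ref{tti}, every solution $z=(z_1,\dots,z_{q-1})$ of the fixed-point system (\ref{pt1}) is a two-valued profile: there is a subset $M$ of the spin labels and a single number $z^*>0$ solving (\ref{rm}) with $m=|M|$ such that $z_i=z^*$ on $M$ and $z_i=1$ off $M$. Passing to the boundary field via Remark \ref{ns}, this means that every TISGM has the form $\mu_{h(M){\mathbf 1}_M}$, where $h(M)=h(m)$ depends only on $m=|M|$ and solves (\ref{rm}). Thus the whole set of TISGMs is indexed by pairs (subset $M\subset\{1,\dots,q\}$, solution $z^*$ of (\ref{rm}) at $m=|M|$), and the remaining task is to trim this indexing so as to remove the redundancies.

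For the first assertion I would exploit the symmetry $(M,z^*)\leftrightarrow(M^c,1/z^*)$. By Lemma \ref{l1}, $1/z^*$ solves (\ref{rm}) with $m$ replaced by $q-m$, and by Proposition \ref{tp} the two pairs define one and the same measure, since $h(M^c)=-h(M)$ gives $\mu_{h(M){\mathbf 1}_M}=\mu_{h(M^c){\mathbf 1}_{M^c}}$. Because $\min(|M|,|M^c|)=\min(m,q-m)\le[q/2]$ for every $m$, each TISGM therefore admits a representative whose distinguished subset has size at most $[q/2]$; this is precisely the assertion that each TISGM corresponds to a solution of (\ref{rm}) with some $m\le[q/2]$. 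For the second assertion I would fix $m\le[q/2]$ and a solution $z^*$ and count the admissible placements of $M$: by Remark \ref{ns} there are exactly ${q\choose m}$ ways to choose which $m$ of the $q$ spin values carry the distinguished field, yielding ${q\choose m}$ distinct vectors $\tilde h$. To see that these give ${q\choose m}$ distinct measures, the natural argument is that the single-site marginal of $\mu_{h(M){\mathbf 1}_M}$ assigns one common weight to the spins in $M$ and another to the spins off $M$; since $z^*\ne1$ these differ, so the measure recovers the partition $\{M,M^c\}$ together with the information of which block is favored, hence recovers $M$ itself. Consequently the map $M\mapsto\mu_{h(M){\mathbf 1}_M}$ is injective on subsets of a fixed size.

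The step needing the most care is the boundary case $m=q/2$ (for even $q$), where $M$ and $M^c$ have equal size. Here Proposition \ref{tp} sends $(M,z^*)$ to $(M^c,1/z^*)$, a label of the same size $q/2$, and I must check that this does not collapse two of the ${q\choose q/2}$ measures attached to the single solution $z^*$. Indeed $(M^c,1/z^*)$ still favors the block $M$, so injectivity on size-$q/2$ subsets survives and the count ${q\choose q/2}$ is correct for a fixed solution. The complementary phenomenon — that at $m=q/2$ the two solutions $z^*$ and $1/z^*$ generate the same set of measures, so they must not be counted twice — is a statement relating distinct solutions of (\ref{rm}) rather than distinct subsets, and I would defer it to the global tally carried out in Theorem \ref{Theorem4}.
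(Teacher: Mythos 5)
Your proposal is correct and follows essentially the same route as the paper, which states Corollary \ref{c1} as an immediate consequence of Proposition \ref{tti} (two-valued structure of solutions), Lemma \ref{l1} and Proposition \ref{tp} (the $(M,z^*)\leftrightarrow(M^c,1/z^*)$ identification folding $m$ down to $[q/2]$), and Remark \ref{ns} (the ${q\choose m}$ count). Your added injectivity check via single-site marginals, and the care at $m=q/2$, only make explicit distinctness points the paper leaves implicit and defers to the tally in Proposition \ref{pw} and Theorem \ref{Theorem4}.
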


Now to find {\it explicit} solutions of (\ref{rm}) and to obtain {\it explicit} formulas of critical temperatures we assume $k=2$. Then denoting $x=\sqrt{z}$ from (\ref{rm}) we get
\begin{equation}\label{qe}
mx^3-(\theta+m-1)x^2+(\theta+q-m-1)x-q+m=0.
\end{equation}
Note that $x=1$ is a solution to (\ref{qe}). Dividing the left hand side of this equation by $x-1$ we obtain
\begin{equation}\label{qe1}
mx^2-(\theta-1)x+q-m=0.
\end{equation}
This equation has solutions
\begin{equation}\label{s}
x_1(m)={\theta-1-\sqrt{(\theta-1)^2-4m(q-m)}\over 2m}, \ \ x_2(m)={\theta-1+\sqrt{(\theta-1)^2-4m(q-m)}\over 2m},
\end{equation}
if
\begin{equation}\label{tm}
\theta\geq \theta_m=1+2\sqrt{m(q-m)}, \ \ m=1,\dots,q-1.
\end{equation}
From this formula of $\theta_m$ we get (\ref{T2}).

\begin{rk}\label{rr} We note that if $x>0$ is a common solution to equations (\ref{qe1}) for different values of $m$, say $m_1$ and $m_2$, then $x=1$. Indeed, for this $x$ we should have
$$
m_1x^2-(\theta-1)x+q-m_1=0, \ \ m_2x^2-(\theta-1)x+q-m_2=0.$$
Subtracting from the first equation the second one we obtain
$$(m_1-m_2)(x^2-1)=0.$$ Hence, in case $m_1\ne m_2$ we have only $x=1$.

Moreover, $x_1(m)\ne 1$ and $x_2(m)\ne 1$ if $\theta\ne q+1$. If $\theta=q+1$ then
\begin{equation}\label{q1}
x_1(m)=\left\{\begin{array}{ll}
1, \ \ \mbox{if} \ \ q\geq 2m\\[3mm]
{q\over m}-1, \ \ \mbox{if} \ \ q< 2m;
\end{array}\right.
 \ \ x_2(m)=\left\{\begin{array}{ll}
1, \ \ \mbox{if} \ \ q\leq 2m\\[3mm]
{q\over m}-1, \ \ \mbox{if} \ \ q>2m.
\end{array}\right.
\end{equation}
In particular, $x_1(m)=x_2(m)=1$ if $q=2m$.
\end{rk}

It is easy to see that
\begin{equation}\label{st}
\theta_m=\theta_{q-m} \ \ \mbox{and} \ \ \theta_{1}<\theta_2<\dots<\theta_{[{q\over 2}]-1}<\theta_{[{q\over 2}]}\leq q+1.
\end{equation}

Thus we have the following

\begin{pro}\label{pw} Let $k=2$, $J>0$.
\begin{itemize}
\item[1.]
If $\theta<\theta_1$ then the system of equations (\ref{pt}) has
a unique solution $h_0=(0,0,\dots,0)$;

\item[2.]
If $\theta_{m}<\theta<\theta_{m+1}$ for some $m=1,\dots,[{q\over 2}]-1$ then the system of equations (\ref{pt}) has
solutions
$$h_0=(0,0,\dots,0), \ \ h_{1i}(s), \ \ h_{2i}(s), \ \ i=1,\dots, {q-1\choose s}, $$
$$h'_{1i}(q-s), \ \ h'_{2i}(q-s), \ \ i=1,\dots, {q-1\choose q-s}, \ \ s=1,2,\dots,m,$$
where $h_{ji}(s)$, (resp. $h'_{ji}(q-s)$)\, $j=1,2$ is a vector with $s$ (resp. $q-s$) coordinates equal to $2\ln x_j(s)$ (resp. $2\ln x_j(q-s)$) and the remaining $q-s-1$ (resp. $s-1$) coordinates equal to 0. The number of such solutions is equal to
$$1+2\sum_{s=1}^m{q\choose s};$$

\item[3.] If $\theta_{[{q\over 2}]}<\theta\ne q+1$ then there are $2^q-1$ solutions to (\ref{pt});

\item[4] If $\theta=q+1$ then the
number of solutions is as follows
$$\left\{\begin{array}{ll}
2^{q-1}, \ \ \mbox{if} \ \ q  \ \ \mbox{is odd}\\[2mm]
2^{q-1}-{q-1\choose q/2}, \ \ \mbox{if} \ \ q \ \ \mbox{is even};
\end{array}\right.$$

\item[5.] If $\theta=\theta_m$, $m=1,\dots,[{q\over 2}]$, \,($\theta_{[{q\over 2}]}\ne q+1$) then  $h_{1i}(m)=h_{2i}(m)$. The number of solutions is equal to
$$1+{q\choose m}+2\sum_{s=1}^{m-1}{q\choose s}.$$
\end{itemize}
\end{pro}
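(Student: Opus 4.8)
The plan is to combine the structural reduction of Proposition~\ref{Theorem2} with the explicit root count of the quadratic~(\ref{qe1}), and then to collapse everything to a binomial identity. First I would record the bijection furnished by Proposition~\ref{Theorem2}: every solution $z$ of~(\ref{pt1}) is determined by its support $M=\{i:z_i\neq 1\}\subset\{1,\dots,q-1\}$ together with the common value $z^*\neq 1$ it takes on $M$, where $z^*$ solves~(\ref{rm}) with $m=|M|$; the all-ones solution $h_0$ is the single exception and is counted once. Distinctness of the nontrivial solutions is then automatic, since such a solution vector recovers both $M$ and $z^*$. Hence the count equals $1$ plus, summed over cardinalities $c=1,\dots,q-1$, the product of ${q-1\choose c}$ (the number of supports of size $c$) with the number of admissible values $z^*\neq 1$.

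For $k=2$ the admissible values come from the factorization of the cubic~(\ref{qe}) as $(x-1)$ times the quadratic~(\ref{qe1}), with $z^*=x_j(c)^2$ and hence $h=2\ln x_j(c)$ on the support; the roots $x_1(c),x_2(c)$ of~(\ref{s}) are real exactly when $\theta\geq\theta_c$. So for each $c$ the number of admissible $z^*$ is: two when $\theta>\theta_c$ and $\theta\neq q+1$; one when $\theta=\theta_c$ (a double root equal to $\sqrt{(q-c)/c}$, which I must check is $\neq 1$) or when $\theta=q+1$ (one root collapses to $1$ by~(\ref{q1})); and none when $\theta<\theta_c$. Using the symmetry and ordering~(\ref{st}), in the regime $\theta_m<\theta<\theta_{m+1}$ the cardinalities admitting two roots are precisely $c\in\{1,\dots,m\}\cup\{q-m,\dots,q-1\}$, which produces the two families $h_{ji}(s)$ and $h'_{ji}(q-s)$ listed in the statement.

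The last step is arithmetic. Pairing $s$ with $q-s$ and invoking Pascal's identity in the form ${q-1\choose s}+{q-1\choose q-s}={q-1\choose s}+{q-1\choose s-1}={q\choose s}$ turns the two families into $2\sum_{s=1}^m{q\choose s}$, giving part~2; taking $c$ over all of $\{1,\dots,q-1\}$ together with $\sum_{c=1}^{q-1}{q-1\choose c}=2^{q-1}-1$ gives $2^q-1$ for part~3; and part~1 is the empty-sum case $\theta<\theta_1$. For the two boundary regimes I would reduce each $c$'s contribution as dictated above: at $\theta=q+1$ every surviving $c$ contributes only one root (and $c=q/2$, when $q$ is even, contributes none since then $\sqrt{(q-c)/c}=1$), yielding part~4; at $\theta=\theta_m$ the cardinalities $c=m$ and $c=q-m$ drop to a single (double) root while $c<m$ and $c>q-m$ keep both, yielding $1+{q\choose m}+2\sum_{s=1}^{m-1}{q\choose s}$ for part~5.

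I expect the main obstacle to be precisely this boundary bookkeeping: one must track correctly when a root of~(\ref{qe1}) becomes a double root or collapses to $1$, since these are exactly the degeneracies that alter the counts in parts~4 and~5, and they depend on Remark~\ref{rr} and formula~(\ref{q1}). In particular, verifying that the double root $\sqrt{(q-c)/c}$ at $\theta=\theta_c$ differs from $1$ for $c\leq[q/2]$ relies on $q\neq 2c$, i.e. on the standing hypothesis $\theta_{[q/2]}\neq q+1$. The remainder is routine once the bijection and Pascal's identity are in hand.
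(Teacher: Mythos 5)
Your proposal is correct and follows essentially the same route as the paper: reduce to the one-dimensional equation via Proposition \ref{Theorem2}, read off the two roots of the quadratic (\ref{qe1}) for each support cardinality, handle the degeneracies at $\theta=\theta_m$ (double root) and $\theta=q+1$ (a root collapsing to $1$, via (\ref{q1}) and Remark \ref{rr}), and combine the counts with the identity ${q-1\choose s}+{q-1\choose q-s}={q\choose s}$. The boundary bookkeeping you flag as the main obstacle is exactly what the paper does, and your treatment of it is accurate.
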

\begin{proof} 1. Straightforward.

2. Fix $m\in \{1,\dots,q-1\}$. Then for each $M\subset \{1,\dots,q-1\}$ with
cardinality $q-m-1$, we can have two solutions (\ref{s}). Note that $m$ is the number of different from 0 coordinates of a solution to the system (\ref{pt}). By (\ref{s})-(\ref{st}) the number of such solutions (vectors) is equal to two times  ${q-1\choose m}+{q-1\choose q-m}={q\choose m}$.

3. If $\theta>\theta_{[q/2]}$ and $\theta\ne q+1$ then the number of solutions depending on the parity of $q$ is as follows:

\begin{equation}\label{qqq}\left\{\begin{array}{ll}
1+2\sum_{s=1}^{[q/2]}{q\choose s}, \ \ \mbox{if} \ \ q \ \ \mbox{is odd}\\[3mm]
1+{q\choose q/2}+2\sum_{s=1}^{q/2-1}{q\choose s}, \ \ \mbox{if} \ \ q \ \ \mbox{is even}
\end{array}\right.=2^q-1,\end{equation}
where we used the following formulas
$${q\choose s}={q\choose q-s}, \ \ \ \sum_{s=1}^{q}{q\choose s}=2^{q}-1.
$$

4. For $\theta=q+1$ we use (\ref{q1}), Remark \ref{rr}, (\ref{st}) and (\ref{qqq}) to get the following formulas for the number of solutions:

{\it $q$ is odd:}
$$1+\sum_{s=1}^{[q/2]}{q\choose s}=2^{q-1};$$

{\it $q$ is even:}
$$1+\sum_{s=1}^{q/2-1}{q\choose s}=2^{q-1}-{q-1\choose q/2},$$
in the last formula we used ${1\over 2}{q\choose q/2}={q-1\choose q/2}.$

5. On critical point $\theta_m$ the equation (\ref{rm}) has a unique solution.
 So omitting a factor 2 in front of ${q\choose m}$ we get the formula in a similar way as in case 2.
 \end{proof}

\begin{rk} \begin{itemize}
\item[1)] By Proposition \ref{pw} for $k=2$ and $J>0$ we have the {\rm full} description of solutions to the system of equations (\ref{pt}). Consequently, this gives the full description of TISGMs. Moreover, by Remark \ref{ns} and Corollary \ref{c1}, depending on parameter $\theta$ the maximal number of such measures can be $2^{q}-1$. This number is larger than the number $q+1$ of known (see \cite{Ga9}) TISGMs of the Potts model with $q\geq 3$.

\item[2)] We note that the well-known critical value of $\theta$ which implies non-uniqueness of Gibbs measure of the Potts model is $\theta_c={k+q-1\over k-1}$, $k\geq 2$, $q\geq 2$ (see \cite{Ga8}, \cite{Ga9}, \cite[p.115]{Ro}).
For $k=2$ by (\ref{tm}) we have
$$\left\{\begin{array}{ll}
\theta_c> \theta_m, \ \ \mbox{for all} \ \ m\in \{1,\dots,q-1\}\setminus \{q/2\}\\[2mm]
\theta_c=\theta_m, \ \ \mbox{for} \ \ m=q/2.
\end{array}
\right.$$
By the Corollary \ref{co1} we improve the critical value, i.e., for the $q$-state Potts model ($q\geq 3$) on the Cayley tree, the Gibbs measure is not unique starting from $\theta=\theta_1$.
\end{itemize}
\end{rk}

\begin{proof} {\sl Proof of Theorem} \ref{Theorem4}. {\bf Step 1.} First we shall show that for each $m$ the equation (\ref{rm})
has up to three solutions. This equation can be rewritten in the following form
\begin{equation}\label{d1}
(x-1)\varphi_m(x,\theta)=0,
\end{equation}
where
$$ \varphi_m(x,\theta)=mx^k-(\theta-1)(x^{k-1}+x^{k-2}+\dots+x)+q-m.$$

It is well known (see \cite[p.28]{Pra}) that the number of positive
roots of a polynomial does not exceed the number of sign
changes of its coefficients. Using this for $\varphi_m$ we see
that the equation
\begin{equation}\label{oe}
\varphi_m(x,\theta)=0
\end{equation} has up to two positive solutions.

{\bf Step 2.} In this step we shall check that Remark \ref{rr} is also true for $k\geq 3$, i.e. if $x>0$ is a common
solution to equation (\ref{oe}) for different values of $m$ then $x=1$.
Indeed, if $x>0$ is a solution to (\ref{oe}) for $m=m_1$ and $m=m_2$ with $m_1\ne m_2$. Then subtracting one equation from the second one we get
$(m_1-m_2)(x^k-1)=0$, i.e $x=1$.

Moreover, it is easy to see that $x=1$ is a solution to (\ref{oe}) iff $\theta=\theta_c={k+q-1\over k-1}$. As mentioned above the equation has up to two solutions, assume now one of the solutions is $x=1$ then dividing
the polynomial $\varphi_m(x,\theta_c)$ by $x-1$ we get
$$\psi_m(x)={\varphi_m(x,\theta_c)\over x-1}=m\sum_{i=0}^{k-1}x^i-{q\over k-1}\sum_{i=0}^{k-2}(k-i-1)x^i.$$
Thus the second solution of (\ref{oe}) will be 1 if $\psi_m(1)=0$ which holds iff $q=2m$.

{\bf Step 3.} Now we shall prove that for each $m$ there exists a unique critical temperature $T_{c,m}$.
A critical value $\theta_m$ of the parameter $\theta$ is the value
for which the equation $\varphi_m(x,\theta)=0$ has a unique solution $x$.
It is easy to see that this critical value satisfies the following system
\begin{equation}\label{vs}
\left\{\begin{array}{ll}
\varphi_m(x, \theta_m)=0\\[2mm]
\varphi'_m(x, \theta_m)=mkx^{k-1}-(\theta_m-1)((k-1)x^{k-2}+(k-2)x^{k-3}+\dots+1)=0.
\end{array}
\right.
\end{equation}
From the first equation of the system (\ref{vs})
we get
\begin{equation}\label{te}
\theta_m-1={mx^k+q-m\over x^{k-1}+x^{k-2}+\dots+x}.
\end{equation}
Substituting this to the second equation of (\ref{vs}) we get
\begin{equation}\label{se}
m\sum_{i=1}^{k-1}ix^{2k-i-1}-(q-m)\sum_{i=1}^{k-1}ix^{i-1}=0.
\end{equation}

By the above-mentioned property of polynomials we see that the equation (\ref{se})  has a unique positive solution, say $x=x_*=x_*(m)$. Then for each $m$ the unique critical value $\theta_m$ is given by (\ref{te}), i.e.,
\begin{equation}\label{te1}
\theta_m=1+{mx_*^k+q-m\over x_*^{k-1}+x_*^{k-2}+\dots+x_*}.
\end{equation}
The critical temperature is $T_{c,m}=J\ln^{-1}(\theta_m)$. We have equality $T_{c,m}=T_{c,q-m}$ which follows from the following simple equality
$$\varphi_m(x,\theta)=x^k\varphi_{q-m}(x^{-1},\theta).$$
Hence it remains critical values for $m=1,\dots,[q/2]$.

{\bf Step 4.} Now we shall show that the critical
temperature $T_{c,m}$ is a decreasing function of $m=1,\dots,[q/2]$, so they are distinct.

In order to show that $x_*$ is a decreasing function of $m$, we find $m$ from (\ref{se}) as a function of $x_*$, i.e.
$$m=\xi(x_*), \ \ \mbox{with} \ \ \xi(x)=q\cdot {\sum_{i=1}^{k-1}ix^{i-1}\over
  \sum_{i=1}^{k-1}i\left(x^{i-1}+x^{2k-i-1}\right)}.$$
We have
$$\xi'(x)=-q\cdot {\sum_{i=1}^{k-1}\sum_{j=1}^{k-1}ij(2k-i-j)x^{2k+i-j-3}\over
  \left(\sum_{i=1}^{k-1}i\left(x^{i-1}+x^{2k-i-1}\right)\right)^2}<0.$$

Hence $m$ is a decreasing function of $x_*$. Consequently, $x_*=\xi^{-1}(m)$ is a decreasing function of $m$.

Now let us write $\theta_m$ as function of $x_*$, i.e. substitute $m=\xi(x_*)$ in (\ref{te1}):
\begin{equation}\label{te2}
\theta_m=\eta(x_*)=1+{\xi(x_*)(x_*^k-1)+q\over \sum_{i=1}^{k-1}x_*^i}.
\end{equation}
We have
\begin{equation}\label{eta}
\eta'(x)={\xi'(x)(x^k-1)\sum_{i=1}^{k-1}x^i+\xi(x)
\sum_{i=1}^{k-1}\left((k-i)x^{k+i-1}+ix^{i-1}\right)-q\sum_{i=1}^{k-1}ix^{i-1}\over
\left(\sum_{i=1}^{k-1}x^i\right)^2}.
\end{equation}
Substituting the formulas for $\xi(x)$ and $\xi'(x)$ in (\ref{eta}) we get
$$
\eta'(x)=-\,{q\over
\left(\sum_{i=1}^{k-1}x^i\right)^2 \left(\sum_{i=1}^{k-1}i\left(x^{i-1}+x^{2k-i-1}\right)\right)^2}\times$$
$$\left[   (x^k-1)\left(\sum_{i=1}^{k-1}x^i\right)
\left(\sum_{i=1}^{k-1}\sum_{j=1}^{k-1}ij(2k-i-j)x^{2k+i-j-3}\right)\right.$$
\begin{equation}\label{eta1}
\left.+\left(\sum_{i=1}^{k-1}ix^{i-1}\right)
\left(\sum_{i=1}^{k-1}i\left(x^{i-1}+x^{2k-i-1}\right)\right)
\left(\sum_{i=1}^{k-1}\left(ix^{2k-i-1}-(k-i)x^{k+i-1}\right)\right)\right].
\end{equation}
We have
$$\sum_{i=1}^{k-1}\left(ix^{2k-i-1}-(k-i)x^{k+i-1}\right)=
\sum_{i=1}^{k-1}ix^{2k-i-1}-\sum_{i'=1}^{k-1}i'x^{2k-i'-1}=0,$$
where $i'=k-i$.
Hence from (\ref{eta1}) we obtain
\begin{equation}\label{eta2}
\eta'(x)=-q\cdot {(x^k-1)\left(\sum_{i=1}^{k-1}\sum_{j=1}^{k-1}ij(2k-i-j)x^{2k+i-j-3}\right)\over
\left(\sum_{i=1}^{k-1}x^i\right) \left(\sum_{i=1}^{k-1}i\left(x^{i-1}+x^{2k-i-1}\right)\right)^2}.
\end{equation}

Note that $\xi(1)=q/2$,
thus for $m\in \{1,\dots, [q/2]\}$ we have $x_*(m)>1$. Thus for $x>1$ by (\ref{eta2}) we have
$\eta'(x)<0$. Consequently, $\theta_m$ is a decreasing function of $x_*$. Since $x_*$ is a decreasing function of $m$ we conclude that $\theta_m$ is an increasing function of $m\in \{1,\dots,[q/2]\}$. Moreover, $$\max_{x\geq 1}\eta(x)=\eta(1)=\theta_c=1+{q\over k-1}.$$
Hence using formula $\theta_m=\exp(J/T_{c,m})$ we get $T_{c,1}>T_{c,2}>\dots>T_{c,[{q\over 2}]}\geq T_{cr}.$

{\bf Step 5.} In this last step we calculate the number of TISGMs. In above steps we showed that
if $T>T_{c,m}$ then the equation (\ref{rm}) has the unique solution $x=1$, if $T=T_{c,m}$ then this equation has two solutions $x=1$ and $x=x'$. If $T<T_{c,m}$ then the equation (\ref{rm}) has three solutions $x=1$, $x=x'$ and $x=x''$. Consequently, the number of solutions of the equation (\ref{rm}) is at most three, independently of $k\geq 2$.
Thus the number of solutions is the same as in the case of $k=2$ mentioned in Proposition \ref{pw}, hence using Corollary \ref{c1} and Remark \ref{ns} we get numbers of TISGMs.
This completes the proof.
\end{proof}

\begin{proof} {\sl Proof of Corollary} \ref{co1}. We know that if $T>T_{c,1}$ then there is unique solution $h=(0,\dots,0)$. For any $T>T_{cr}$ the Gibbs measure $\mu_0$ corresponding to $h=(0,\dots,0)$ is extreme (see \cite[Theorem 5.6]{Ro}). For $T\leq T_{c,1}$ we have measure $\mu_0$ and at least $q$ new measures mentioned in Theorem \ref{Theorem4}. If we assume that all the new measures are not extreme then there remains only one known extreme measure $\mu_0$. But in this case the non-extreme measures can not be decomposed only into the unique measure $\mu_0$. Consequently, at least one of the new measures must be extreme or decomposable into other extreme measures.
\end{proof}

\begin{proof} {\sl Proof of Theorem} \ref{tex}. Let $\alpha_i\geq 0$ be such that $\sum_{i=1}^{|{\mathcal G}_{TI}|-1}\alpha_i=1$, and assume that
\begin{equation}\label{mg}
\mu=\sum_{i=1}^{|{\mathcal G}_{TI}|-1}\alpha_i\mu_i.
\end{equation}
Let $z^{(i)}$ be the solution of the equation (\ref{rm}) which corresponds to $\mu_i$.
Since this number does not depend on $x\in V$ we see that our measures satisfy conditions of Corollary (12.18) of \cite{Ge}, by this corollary we have that the RHS of the (\ref{mg}) can not be SGM (or in language of \cite{Ge} a Markov chain).
But in the LHS we have a SGM $\mu$. This is a contradiction.
\end{proof}

\section*{ Acknowledgements}

U.A. Rozikov thanks the  DFG
Sonderforschungsbereich SFB $|$ TR12-Symmetries and Universality in Mesoscopic Systems
and the Ruhr-University Bochum (Germany)
for financial support and hospitality. He also thanks IMU-CDC for a travel support.
We thank both referees for a number
of suggestions which have improved the paper.

\end{document}